\newtheorem{theorem}{Theorem}[section]
\newtheorem*{maintheorem}{Main Theorem}
\newtheorem{lemma}[theorem]{Lemma}
\newtheorem{proposition}[theorem]{Proposition}
\theoremstyle{definition} \theoremstyle{remark}
\numberwithin{equation}{section}
\newcommand{\trs}{{\rm Tr\hskip -0.31em}~}
\newcommand{\trm}{{\rm Tr\hskip -0.25em}~}
\newcommand{\trl}{{\rm Tr\hskip -0.12em}~}
\newcommand{\df}[2]{\frac{d#1}{d#2}}
\DeclareMathOperator{\frechetdiff}{\mathit d}
\newcommand{\fds}[1]{\frechetdiff\hskip -0.43em{#1}}
\newcommand{\fdm}[1]{\frechetdiff\hskip -0.28em{#1}}
\newcommand{\fdl}[1]{\frechetdiff\hskip -0.15em{#1}}
\begin{document}

\title{Peierls-Bogolyubov's inequality for deformed exponentials}


\author[1]{Frank Hansen}
\author[2]{Jin Liang}
\author[2]{Guanghua}

\affil[1]{\footnotesize Institute for Excellence in Higher Education, Tohoku University, Sendai, Japan}
\affil[2]{School of Mathematical Sciences, Shanghai Jiao Tong
University, Shanghai, China}
\date{April 3, 2017\\{\small Revised May 31, 2017}}

\maketitle

\begin{abstract}
We study convexity or concavity of certain trace functions for the deformed logarithmic and exponential functions, and obtain in this way new trace inequalities for deformed exponentials that may be considered as generalizations of Peierls-Bogolyubov's inequality. We use these results to improve previously known lower bounds for the Tsallis relative entropy.
\end{abstract}

{\textbf Keywords:} Deformed exponential function;  Peierls-Bogolyubov's inequality; Tsallis relative entropy.

\section{Introduction}
In statistical mechanics and in quantum information theory the calculation of the partition function $ \trs\exp H $ of the Hamiltonian $ H $ of a physical system is an important issue, but the computation is often difficult. However, it may be simplified by first computing a related quantity $\trs\exp A, $ where $A$ is an easier to handle component of the Hamiltonian. Usually, the Hamiltonian is written as a sum $H=A+B$ of two operators, and the Peierls-Bogolyubov inequality states that

\begin{equation}\label{PB-inequality}
\log\frac {\trs\exp(A+B)}{\trs\exp A}\geq \frac{\trm\exp(A)B}{\trs\exp A}\,,
\end{equation}
which then provides information about the difficult to calculate partition function.
We give in this paper generalizations of Peierls-Bogolyubov's inequality in terms of the so-called deformed exponential and logarithmic functions. We formulate the results for operators on a finite dimensional Hilbert space $ \mathcal H, $ but note that the results with proper modifications extend also to infinite dimensional spaces.

\begin{maintheorem}
Let $ A,B\in B(\mathcal H) $ be self-adjoint operators, and let  $ \varphi $ be a positive functional on $ B(\mathcal H). $
\begin{enumerate}

\item[(i)]  If $ -\infty <q<1 $ and $ r\ge q $ and both $ A $ and $ A+B $ are bounded from above by $ -(q-1)^{-1}, $ then
\[
\log_r\trs \exp_q(A+B)-\log_r\trm \exp_q A
\ge
\bigl(\trs \exp_q A\bigr)^
{r-2} \trm (\exp_q A)^{2-q} B.
\]

\item[(ii)] If $ -\infty< q\le 0 $ and $ r\ge q $ and both $ A $ and $ A+B $ are bounded from above by $ -(q-1)^{-1}, $ then
\[
\log_r\varphi\bigl(\exp_q(A+B)\bigr) -\log_r\varphi\bigl(\exp_q (A)\bigr)
\ge
\varphi\bigl(\exp_q (A)\bigr)^{r-2}\varphi \bigl(\fdm{}\exp_q(A)B\bigr).
\]

\item[(iii)]  If $ 1<q\le 2 $ and $ r\ge q $ and both $ A $ and $ A+B $ are bounded from below by $ -(q-1)^{-1}, $ then 
\[
\log_r\trs \exp_q(A+B)-\log_r\trs\exp_q A
\ge
(\trs\exp_q A)^{r-2} \trs(\exp_q A)^{2-q}B.
\]

\item[(iv)] If $ \frac{3}{2}\le q\le 2 $ and $ r\ge q $ and both $ A $ and $ A+B $ are bounded from below by $ -(q-1)^{-1}, $ then 
\[
\log_r\varphi\bigl( \exp_q(A+B)\bigr)-\log_r\varphi(\exp_q A)
\ge
\varphi\bigl(\exp_q A\bigr)^{r-2} \varphi\bigl(\fdm{}\exp_q(A)B\bigr) .
\]

\item[(v)]  If $ q\ge 2 $ and $ r\le q $ and both $ A $ and $ A+B $ are bounded from below by $ -(q-1)^{-1}, $ then 
\[
\log_r\varphi\bigl( \exp_q(A+B)\bigr)-\log_r\varphi(\exp_q A)
\le
\varphi\bigl(\exp_q A\bigr)^{r-2} \varphi\bigl(\fdm{}\exp_q(A)B\bigr).
\]
If in particular $ \varphi $ is the trace this inequality reduces to
\[
\log_r\trs \exp_q(A+B)-\log_r\trs\exp_q A
\le
(\trs\exp_q A)^{r-2} \trs(\exp_q A)^{2-q}B.
\]
\end{enumerate}
\end{maintheorem}

In subsection~\ref{formulae for Frechet differentials of deformed exponentials}
 we give explicit formulae for the Fréchet differential operators $ \fdm{}\exp_q(A) $ in the parameter ranges $ q\le 0 $ and $ q\ge 3/2. $   
Note that the left-hand sides in the above theorem may be written as
\[
\frac{\varphi\bigl(\exp_q(A+B)\bigr)^{r-1}-\varphi(\exp_q A)^{r-1}}{r-1}\,,
\]
where $ \varphi $ in $ (i) $ is replaced by the trace.
If we in $ (iii) $ let $ q $ tend to one, we obtain the inequality
\[
\log_r\trs \exp(A+B)-\log_r\trs\exp A\ge\frac{\trm(\exp A)B}{\bigl(\trs\exp A\bigr)^{2-r}}
\]
for $ r>1 $ and arbitrary self-adjoint operators $ A $ and $ B. $ If we furthermore let $ r $ tend to one we recover Peierls-Bogolyubov's inequality (\ref{PB-inequality}).

Furuichi \cite[Corollary 3.2]{kn:furuichi:2006} proved $ (iii) $ in the case $ r=q $ by very different methods.
It may be instructive to compare the above results with the first author's study \cite{Han15} of the deformed Golden-Thompson trace inequality.

We obtain, in Theorem~\ref{variant PB-inequality},  another variant Peierls-Bogolyubov type of inequality, and we improve,  in Theorem~\ref{lower bound of Tsallis relative entropy},
 previously known lower bounds for the Tsallis relative entropy.

 The Peierls-Bogolyubov inequality has been widely used in statistical mechanics and quantum information theory. Recently, Bikchentaev \cite{Bik11} proved that the Peierls-Bogolyubov inequality characterizes the tracial functionals among all positive functionals on a $C^*-$algebra. Moreover, Carlen and Lieb in \cite{CLie14} combined  this inequality with the Golden-Thompson inequality to discover sharp remainder terms in some quantum entropy inequalities.

\subsection{Deformed exponentials}

The deformed logarithm $\log_q$ is defined by setting
\begin{eqnarray*}
\log_q x=\left\{
\begin{array}{ll}
\displaystyle\frac{x^{q-1}-1}{q-1} \quad &q\ne 1\\[2.5ex]
\log x  &q=1
\end{array}
\right.
\end{eqnarray*}
for $x>0.$ The deformed logarithm is also denoted the $q$-logarithm. The inverse function is called the $q$-exponential. It is denoted by  $\exp_q$ and is given by the formula
\begin{eqnarray*}
\exp_q x=\left\{
\begin{array}{ll}
(x (q-1)+1)^{1/(q-1)} \quad &q\ne 1\\[1.5ex]
\exp x \quad &q=1
\end{array}
\right.
\end{eqnarray*}
for $x> -1/(q-1).$ The $ q $-logarithm is a bijection of the positive half-line onto the open interval $ (-(q-1)^{-1},\infty). $ Furthermore,
\[
\df{}{x}\log_q(x)=x^{q-2}\qquad\text{and}\qquad \df{}{x}\exp_q(x)=\exp_q(x)^{2-q}\,.
\]
Note also that
\[
\log_q x-\log_q y=\frac{x^{q-1}-y^{q-1}}{q-1}
\]
for $ x,y>0. $ If $q $ tends to one then the $q$-logarithm and the $q$-exponential functions converge, respectively, toward the logarithmic and the exponential functions.

\section{Preliminaries}

\begin{proposition}\label{proposition on homogeneity}
Let $ f $ be a real positive function defined in the cone $ B(\mathcal H)_+ $ of positive definite operators acting on a Hilbert space $ \mathcal H, $ and assume $ f $ is homogeneous of degree $ p\ne 0. $ 
\begin{enumerate}

\item[(i)] If $ f $ is convex and $ p>0, $ then $ f^{1/p} $ is convex.

\item[(ii)] If  $ f $ is convex and $ p<0, $ then $ f^{1/p} $ is concave.

\item[(iii)] If $ f $ is convex and $ p<0 $ and $ r>0, $ then $ f^r $ is convex.

\item[(iv)] If  $ f $ is concave and $ p>0, $ then $ f^{1/p} $ is concave.

\item[(v)] If $ f $ is concave and $ p<0, $  then $ f^{1/p} $ is convex.

\item[(vi)] If $ f $ is concave and $ p>0 $ and $ r<0, $ then $ f^r $ is convex.

\end{enumerate}
\end{proposition}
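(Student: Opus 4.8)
The plan is to reduce all six cases to a single normalization estimate together with the elementary fact that a positive function that is homogeneous of degree $1$ is convex precisely when it is subadditive and concave precisely when it is superadditive. Throughout I write $g=f^{1/p}$; since $f$ is positive and homogeneous of degree $p$, the function $g$ is positive and homogeneous of degree $1$, i.e.\ $g(\lambda X)=\lambda g(X)$ for $\lambda>0$. The equivalence just mentioned follows from homogeneity: if $g$ is subadditive then $g(\lambda X+(1-\lambda)Y)\le g(\lambda X)+g((1-\lambda)Y)=\lambda g(X)+(1-\lambda)g(Y)$, giving convexity, and conversely convexity gives $g(X+Y)=2g(\tfrac12(X+Y))\le g(X)+g(Y)$; the concave/superadditive statement is identical with reversed inequalities.

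The core computation is a scaling argument. Given positive definite $X,Y$, set $a=f(X)^{1/p}$ and $b=f(Y)^{1/p}$, which are strictly positive because $f>0$. Homogeneity gives $f(X/a)=a^{-p}f(X)=1$ and likewise $f(Y/b)=1$, regardless of the sign of $p$. Applying convexity of $f$ to the convex combination with weights $a/(a+b)$ and $b/(a+b)$ of the points $X/a$ and $Y/b$, and using homogeneity on the left, I obtain $(a+b)^{-p}f(X+Y)=f\bigl((X+Y)/(a+b)\bigr)\le 1$, hence $f(X+Y)\le (a+b)^p$; if instead $f$ is concave the inequality is reversed. Now I raise to the power $1/p$: when $p>0$ the map $t\mapsto t^{1/p}$ is increasing, so the convex case yields $g(X+Y)\le g(X)+g(Y)$ (subadditivity, hence $g$ convex, proving (i)) and the concave case yields superadditivity (proving (iv)); when $p<0$ the map is decreasing, so the convex case yields superadditivity (concavity of $g$, proving (ii)) and the concave case yields subadditivity (convexity of $g$, proving (v)). Thus the four statements (i), (ii), (iv), (v) are all instances of this one estimate, the only point requiring care being the direction reversal when $1/p<0$.

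It remains to treat (iii) and (vi), which I deduce from the cases already established by composing with a power. For (iii), where $f$ is convex, $p<0$ and $r>0$, part (ii) tells me that $g=f^{1/p}$ is concave (and positive), and $f^r=g^{pr}$ with $pr<0$. For (vi), where $f$ is concave, $p>0$ and $r<0$, part (iv) tells me $g=f^{1/p}$ is concave, and again $f^r=g^{pr}$ with $pr<0$. In both cases the conclusion follows from the general principle that if $h$ is concave and positive and $s<0$ then $h^s$ is convex: indeed $\phi(t)=t^s$ is convex and non-increasing on $(0,\infty)$, and the composition of a non-increasing convex function with a concave function is convex. The main obstacle in the whole argument is not any single deep step but the consistent bookkeeping of inequality directions as the sign of $p$ (and of $pr$) changes; once the normalization estimate $f(X+Y)\lessgtr\bigl(f(X)^{1/p}+f(Y)^{1/p}\bigr)^p$ is in hand, each case is a matter of applying the correct monotonicity.
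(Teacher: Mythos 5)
Your proof is correct, and its core is the same normalization-by-homogeneity argument the paper uses for (i), (ii), (iv), (v): the paper phrases it through the convex level set $L=\{x\in B(\mathcal H)_+ \mid f(x)\le 1\}$ and auxiliary constants $c>f(x)^{1/p}$, $d>f(y)^{1/p}$, while you apply convexity of $f$ directly to the exactly normalized points $X/a$, $Y/b$ with $a=f(X)^{1/p}$, $b=f(Y)^{1/p}$. These are the same computation, though your exact normalization avoids the paper's implicit passage to the limit in $c$ and $d$, and you state explicitly the equivalence, for positive degree-one homogeneous functions, between subadditivity and convexity that the paper leaves tacit. The genuine divergence is in (iii) and (vi): the paper proves these directly, combining the superadditivity estimate $f(x+y)^{1/p}\ge c+d$ with convexity of $t\mapsto t^{rp}$ to get the midpoint inequality $f\bigl(\tfrac{x+y}{2}\bigr)^r\le\tfrac12\bigl(c^{rp}+d^{rp}\bigr)$, which strictly speaking yields only midpoint convexity and needs continuity (automatic here, since a convex $f$ on the open cone is continuous) to conclude convexity; you instead derive (iii) from (ii) and (vi) from (iv) via the composition rule that a non-increasing convex function of a positive concave function is convex. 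Your route is more modular, reuses the already-proved cases, and sidesteps the midpoint-convexity subtlety entirely, at the modest cost of invoking (and correctly justifying) the composition lemma.
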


\begin{proof}
Assume first that $ f $ is a convex function. The level set
\[
L=\{x\in B(\mathcal H)_+\mid f(x)\le 1\}
\]
is then convex. Take $ x,y\in B(\mathcal H)_+ $ and assume $ p>0. $ Let $ c $ and $ d  $ be any choice of positive numbers such that $ f(x)^{1/p}<c $ and $ f(y)^{1/p}<d. $ We note that $ c^{-1}x, d^{-1}y\in L $ and obtain
\[
\displaystyle f(x+y)^{1/p}=(c+d)f\Bigl(\frac{c}{c+d}\cdot \frac{x}{c}+\frac{d}{c+d}\cdot\frac{y}{d}\Bigr)^{1/p}\le c+d.
\]
Therefore $ f(x+y)^{1/p}\le f(x)^{1/p}+f(y)^{1/p} $ and by homogeneity, we conclude that $ f^{1/p} $ is convex. If $ p<0 $ we choose $ c,d>0 $ such that $ f(x)^{1/p}>c $ and $ f(y)^{1/p}>d. $ This is possible since $ f $ is assumed to be positive. Since the exponent is negative we obtain $ f(x)<c^p $ and $ f(y)<d^p, $ and therefore by homogeneity 
\[
f(c^{-1}x)=c^{-p}f(x)<1\qquad\text{and}\qquad f(d^{-1}y)=d^{-p}f(y)<1.
\]
It follows that $ c^{-1}x, d^{-1}y\in L $ and thus
\[
\displaystyle f(x+y)^{1/p}=(c+d)f\Bigl(\frac{c}{c+d}\cdot \frac{x}{c}+\frac{d}{c+d}\cdot\frac{y}{d}\Bigr)^{1/p}\ge c+d,
\]
where we again used that the exponent is negative. Therefore $ f(x+y)^{1/p}\ge f(x)^{1/p}+f(y)^{1/p} $ and by homogeneity we conclude that $ f^{1/p} $ is concave. This proves $ (i) $ and $ (ii). $ Under the assumptions in $ (iii) $ we  proceed as under $ (ii) $ to obtain
\[
f(x+y)^{1/p}\ge c+d.
\]
By homogeneity and since the exponent $ rp $ is negative,  we obtain the inequality
\[
f\Bigl(\frac{x+y}{2}\Bigr)^r\le \Bigl(\frac{c+d}{2}\Bigr)^{rp}\le\frac{c^{rp}+d^{rp}}{2}
\]
implying convexity of $ f^r. $ We obtain $ (iv), (v) $ and $ (vi) $ by a variation of the reasoning used to obtain $ (i), (ii) $ and $ (iii). $
\end{proof}

\begin{proposition}\label{Main proposition 1}
Consider the function
\[
G(A)=\bigl(\trm A^{p}\bigr)^{1/r}
\]
defined in positive definite operators. Then

\begin{enumerate}
\item[(i)]   $ G $ is concave for $ r\le p< 0, $ 
\item[(ii)]  $ G $ is convex for $ p<0 $ and $ r>0, $
\item[(iii)]  $ G $ is concave for $ 0<p\le 1 $ and $ r\ge p,$
\item[(iv)]   $ G $ is convex for $ p\ge 1 $ and $ 0<r\le p. $
\item[(v)]  $ G $ is convex for $  0<p\le 1 $ and $ r<0. $

\end{enumerate}
\end{proposition}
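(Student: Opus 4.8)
The plan is to reduce everything to the single function $f(A)=\trs A^p$, which is positive and homogeneous of degree $p$ on the cone of positive definite operators, and then to feed $f$ into Proposition~\ref{proposition on homogeneity}. The only analytic input I need is the classical fact that the trace function $A\mapsto \trs h(A)$ is convex (respectively concave) on positive definite operators whenever the scalar function $h$ is convex (respectively concave). Applying this to $h(t)=t^p$, whose second derivative $p(p-1)t^{p-2}$ is nonnegative for $p<0$ and for $p\ge 1$ and nonpositive for $0<p\le 1$, I conclude that $f=\trs A^p$ is convex for $p<0$ or $p\ge 1$ and concave for $0<p\le 1$.

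Next I would introduce the degree-one homogeneous auxiliary function $g=f^{1/p}=(\trs A^p)^{1/p}$ and read off its convexity type from Proposition~\ref{proposition on homogeneity}. When $p<0$ and $f$ is convex, part (ii) gives that $g$ is concave; when $p\ge 1$ and $f$ is convex, part (i) gives that $g$ is convex; and when $0<p\le 1$ and $f$ is concave, part (iv) gives that $g$ is concave. Since $G=f^{1/r}=g^{\,p/r}$, it remains only to compose $g$ with the scalar power $\phi(t)=t^{p/r}$ and use the elementary rules: $\phi\circ g$ is concave if $g$ is concave and $\phi$ is concave nondecreasing (i.e. $0<p/r\le 1$), convex if $g$ is convex and $\phi$ is convex nondecreasing (i.e. $p/r\ge 1$), and convex if $g$ is concave and $\phi$ is convex nonincreasing (i.e. $p/r<0$).

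Matching the exponent $p/r$ against the hypotheses then settles each case. In (i) both $r\le p<0$ force $p/r\in(0,1]$ with $g$ concave, giving concavity of $G$; in (iii) the conditions $0<p\le 1$ and $r\ge p$ again give $p/r\in(0,1]$ with $g$ concave, hence $G$ concave. In (iv) the conditions $p\ge 1$ and $0<r\le p$ give $p/r\ge 1$ with $g$ convex, hence $G$ convex. In (ii) we have $p<0$ and $r>0$, so $p/r<0$ with $g$ concave, giving convexity of $G$; alternatively this is Proposition~\ref{proposition on homogeneity}(iii) applied to $f$ with the positive exponent $1/r$. In (v) we have $0<p\le 1$ and $r<0$, so $p/r<0$ with $g$ concave, again giving convexity of $G$; alternatively this is part (vi) applied to $f$ with the negative exponent $1/r$.

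The routine algebra is just the sign bookkeeping above; the genuine content sits in the two ingredients I would establish first, namely Proposition~\ref{proposition on homogeneity} (already available) and the convexity/concavity of $A\mapsto \trs A^p$. For the latter, the cleanest self-contained argument I would give uses Peierls' inequality $\sum_i h(\langle e_i, A e_i\rangle)\le \trs h(A)$ for convex $h$ together with the identity $\langle e_i,(\lambda A+(1-\lambda)B)e_i\rangle=\lambda\langle e_i,A e_i\rangle+(1-\lambda)\langle e_i,B e_i\rangle$ evaluated in the eigenbasis of $\lambda A+(1-\lambda)B$; this is the step I expect to be the main (though standard) obstacle, and the concave case follows by reversing the inequalities.
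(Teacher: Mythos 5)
Your proof is correct and follows essentially the same route as the paper: reduce to the convexity/concavity of $A\mapsto\trl A^p$, pass to the degree-one (or degree-$p$) homogeneous function via Proposition~\ref{proposition on homogeneity}, and finish by composing with the scalar power $t\mapsto t^{p/r}$, exactly matching the paper's case analysis (the paper handles (ii) and (v) by parts (iii) and (vi) of Proposition~\ref{proposition on homogeneity}, which you also note as the alternative). The only difference is cosmetic: you sketch a Peierls-inequality proof of the trace convexity fact, whereas the paper simply cites it as well known.
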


\begin{proof}
Since the real function $ t\to t^{p} $ is convex in positive numbers for $ p\le 0 $ and $ p\ge 2 $ and concave for $ 0\le p\le 1, $ it is well known that the trace function $ A\to \trm A^p $  retains the same properties. A historic account of this result may be found in
\cite[Introduction]{kn:lieb:2002}.
By $ (ii) $ and $(i)$ in Proposition~\ref{proposition on homogeneity} we thus obtain that the function
\[
A\to\bigl(\trm A^{p}\bigr)^{1/p}
\]
is concave for $ p<0 $ and convex for $ p\ge 2. $  Furthermore, since the real function $ t\to t^{p/r} $ is concave and increasing for $ r\le p<0, $ we derive $ (i) $ in the assertion. Part $ (ii) $ then follows by Proposition~\ref{proposition on homogeneity}{$(iii),$} and Part $(iii)$ follows from Proposition~\ref{proposition on homogeneity}{$(iv)$} by noting that $ 0<p/r\le 1. $ Part $ (iv) $ follows from Proposition~\ref{proposition on homogeneity}{$(i)$} by noting that $ p/r\ge 1, $ and part $ (v)$ finally follows from Proposition~\ref{proposition on homogeneity}{$(vi)$}.
\end{proof}

Note that  $ (\trm A^p)^{1/p} $ for  $ p\ge 1 $ is the Schatten $p$-norm of the positive definite matrix $ A. $ The convexity in this case may also be derived by noting that a norm satisfies the triangle inequality and is positively homogeneous.

\begin{proposition}\label{Main proposition 2}
Let $ B\in B(\mathcal H) $ be an arbitrary operator and consider the function
\[
F(A)=\bigl(\trl B^* A^{p} B\bigr)^{1/r}
\]
defined in positive definite operators. Then

\begin{enumerate}
\item[(i)]  $ F $ is concave for $ -1\le p<0 $ and $ r\le p, $
\item[(ii)] $ F $ is convex for $ -1\le p<0 $ and $ r>0, $
\item[(iii)]  $ F $ is concave for $ 0<p\le 1 $ and $ r\ge p,$
\item[(iv)]   $ F $ is convex for $ 1\le p\le 2 $ and $ 0<r\le p, $
\item[(v)]  $ F $ is convex for $ 0<p\le 1 $ and $ r<0. $
\end{enumerate}
\end{proposition}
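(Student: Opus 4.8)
The plan is to mirror the proof of Proposition~\ref{Main proposition 1} almost verbatim, replacing the trace functional $ A\mapsto\trl A^p $ by the weighted functional
\[
h(A)=\trl B^* A^p B,
\]
and to supply the one genuinely new ingredient, namely the convexity or concavity of $ h $ in the relevant ranges of $ p. $ First I would record that $ h $ is homogeneous of degree $ p $ and, for $ B\ne 0, $ strictly positive on $ B(\mathcal H)_+: $ indeed $ B^* A^p B=(A^{p/2}B)^*(A^{p/2}B)\ge 0 $ with trace $ \trl(A^{p/2}B)^*(A^{p/2}B), $ which vanishes only when $ A^{p/2}B=0, $ i.e.\ when $ B=0 $ since $ A^{p/2} $ is invertible. (If $ B=0 $ the assertion is trivial.) This puts us in a position to invoke Proposition~\ref{proposition on homogeneity} exactly as before.

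The crux is the following replacement for the opening sentence of the proof of Proposition~\ref{Main proposition 1}: the function $ h $ is convex for $ -1\le p\le 0 $ and for $ 1\le p\le 2, $ and concave for $ 0\le p\le 1. $ Here the decisive fact is that the scalar power $ t\mapsto t^p $ is \emph{operator} convex precisely on $ [-1,0]\cup[1,2] $ and \emph{operator} concave precisely on $ [0,1]. $ When $ t\mapsto t^p $ is operator convex, the L\"owner-order inequality
\[
\bigl(\lambda A_1+(1-\lambda)A_2\bigr)^p\le \lambda A_1^p+(1-\lambda)A_2^p
\]
for positive definite $ A_1,A_2 $ and $ \lambda\in[0,1] $ is preserved under the congruence $ X\mapsto B^* X B $ and then under the trace, yielding convexity of $ h; $ the operator concave range gives concavity of $ h $ in the same way. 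This forward implication is all the proof uses, but it also explains why Proposition~\ref{Main proposition 2} must have smaller parameter ranges than Proposition~\ref{Main proposition 1}: there the scalar convexity of $ t\mapsto t^p $ already suffices because $ A\mapsto\trl A^p $ inherits convexity from the scalar function, whereas here $ B $ is arbitrary, $ BB^* $ runs over all positive semidefinite operators, and convexity of $ h $ for every such $ B $ is equivalent to the operator inequality above.

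With this in hand the five cases follow by the identical chain of implications used for Proposition~\ref{Main proposition 1}. For $ (i), $ with $ -1\le p<0 $ and $ r\le p, $ convexity of $ h $ and $ p<0 $ give that $ h^{1/p} $ is concave by Proposition~\ref{proposition on homogeneity}$(ii), $ and composing with the concave increasing map $ t\mapsto t^{p/r} $ (note $ 0<p/r\le 1 $) yields concavity of $ F. $ Case $ (ii) $ follows directly from Proposition~\ref{proposition on homogeneity}$(iii) $ with exponent $ 1/r>0; $ case $ (iii) $ from Proposition~\ref{proposition on homogeneity}$(iv) $ using $ 0<p/r\le 1; $ case $ (iv) $ from Proposition~\ref{proposition on homogeneity}$(i) $ using $ p/r\ge 1; $ and case $ (v) $ from Proposition~\ref{proposition on homogeneity}$(vi) $ with exponent $ 1/r<0. $ I expect the only non-routine step to be the verification of the convexity or concavity of $ h $ in the middle paragraph, which hinges on invoking operator convexity rather than mere convexity of the power function; once that is in place the remaining homogeneity bookkeeping is routine and parallels Proposition~\ref{Main proposition 1} line by line.
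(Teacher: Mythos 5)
Your proposal is correct and follows essentially the same route as the paper: both rest on the operator convexity of $t\mapsto t^p$ on $[-1,0]\cup[1,2]$ (and operator concavity on $[0,1]$), transferred to $A\mapsto\trl B^*A^pB$ by congruence and trace, and then run through Proposition~\ref{proposition on homogeneity} case by case exactly as in Proposition~\ref{Main proposition 1}. The only cosmetic difference is the positivity issue: the paper assumes $BB^*$ invertible by continuity, whereas you verify directly that $\trl B^*A^pB>0$ whenever $B\ne 0$; either device suffices.
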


\begin{proof} By continuity we may assume $ BB^* $ invertible. 
Since the function $ t\to t^p $ is operator convex for $ -1\le p\le 0 $ or $ 1\le p\le 2, $ it follows that the trace function $ A\to\trl B^* A^p B $ is convex for these parameter values. It then follows by $ (ii) $ and $ (i) $ in Proposition \ref{proposition on homogeneity} that the function
\[
A\to (\trl B^* A^p B)^{1/p}
\]
is concave for $ -1\le p<0 $ and convex for $ 1\le p\le 2. $ Furthermore, since the real function $ t\to t^{p/r} $ is concave and increasing for $ r\le p<0 $ we derive part $ (i) $ of the assertion. Part $ (ii) $ then follows by  Proposition~\ref{proposition on homogeneity}{$(iii)$}. Parts $(iii)$ to $ (vi) $ now follow by minor variations of the reasoning in the preceding proposition.
\end{proof}

\subsection{Some deformed trace functions}

\begin{theorem}\label{convexity of G(A)}
Consider the function
\[
G(A)=\log_r\trm \exp_q(A)
\]
defined in self-adjoint $ A>-(q-1)^{-1} $ for $ q>1, $ and in self-adjoint $ A<-(q-1)^{-1} $ for $ q<1. $ Then

\begin{enumerate}

\item[(i)] If $ -\infty< q<1 $ and $ r\ge q, $  then $ G $ is convex,

\item[(ii)] If $ 1<q\le 2 $ and $ r\ge q, $ then $ G $ is convex,

\item[(iii)] If $ q\ge 2 $ and $ r\le q, $ then $ G $ is concave.
\end{enumerate}
\end{theorem}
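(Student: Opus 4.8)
The plan is to unfold both deformations and reduce everything to Proposition~\ref{Main proposition 1}. Writing $X=(q-1)A+1$, the domain restrictions ($A>-(q-1)^{-1}$ for $q>1$ and $A<-(q-1)^{-1}$ for $q<1$) are exactly what is needed to make $X$ a positive definite operator, and $X$ is an affine — hence convexity-preserving — function of the self-adjoint variable $A$. Setting $p=1/(q-1)$ we have $\exp_q(A)=X^{p}$, so
\[
G(A)=\log_r\tr X^{p}=\frac{(\tr X^{p})^{r-1}-1}{r-1}.
\]
Introducing $\rho$ by $1/\rho=r-1$, the bracketed quantity is precisely $(\tr X^{p})^{1/\rho}$, which is the function $H(X)=(\tr X^{p})^{1/\rho}$ treated in Proposition~\ref{Main proposition 1} (with $X$ in the role of its variable).

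Because $X$ is affine in $A$, the convexity type of $G$ in $A$ is identical to that of the map $X\mapsto \frac{1}{r-1}\bigl[(\tr X^{p})^{1/\rho}-1\bigr]$ on the positive definite cone. The additive constant is irrelevant, while the factor $1/(r-1)$ preserves convexity when $r>1$ and reverses convexity and concavity when $r<1$. Thus the whole theorem comes down to applying the five clauses of Proposition~\ref{Main proposition 1} to $H$, with careful sign bookkeeping. The single arithmetic input I would repeatedly use is that, since $r-1$ and $q-1$ always share the same sign in each regime, the hypothesis $r\ge q$ (or $r\le q$) passes under reciprocals to a comparison between $\rho=1/(r-1)$ and $p=1/(q-1)$.

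Concretely, I would first record the size of $p$: for $q<1$ one has $p<0$; for $1<q\le 2$ one has $p\ge 1$; for $q\ge 2$ one has $0<p\le 1$. In (i) ($q<1$): when $r>1$ clause (ii) applies ($p<0$, $\rho>0$), and when $q\le r<1$ the relation $r-1\ge q-1$ between negative numbers gives $\rho\le p<0$, so clause (i) applies; in both cases $H$ has the convexity type that makes $G$ convex. In (ii) ($1<q\le 2$, $r\ge q>1$) the same reciprocal comparison gives $0<\rho\le p$ with $p\ge 1$, so clause (iv) gives convexity of $H$ and hence of $G$. In (iii) ($q\ge 2$): for $1<r\le q$ one gets $\rho\ge p$ with $0<p\le 1$, so clause (iii) makes $H$ concave and $G$ concave; for $r<1$ clause (v) applies ($0<p\le 1$, $\rho<0$), making $H$ convex and, after the sign reversal from $1/(r-1)<0$, making $G$ concave.

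The one point that needs separate handling is $r=1$, which occurs in (i) and (iii) and where the representation degenerates, since $r-1=0$ makes $\rho$ undefined and $\log_r$ becomes the ordinary logarithm. I would settle it by continuity: for fixed $q$ and fixed self-adjoint operators, $\log_r\tr\exp_q(A)$ depends continuously on $r$ and is the pointwise limit of the already-established convex (respectively concave) functions $G_r$ as $r\to 1^{+}$; a pointwise limit of convex functions is convex, so the midpoint inequality, and hence by continuity in $A$ full convexity or concavity, passes to $r=1$. I expect the main obstacle to be exactly this sign- and range-bookkeeping matching $(p,\rho)$ to the correct clause of Proposition~\ref{Main proposition 1}; once $X=(q-1)A+1$ is recognized as an affine change of variable into the positive definite cone, the reduction itself is routine.
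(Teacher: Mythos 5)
Your proof is correct and takes essentially the same route as the paper: both rewrite $G(A)=\frac{1}{r-1}\bigl[\bigl(\trm((q-1)A+1)^{1/(q-1)}\bigr)^{r-1}-1\bigr]$, exploit affinity of $A\mapsto (q-1)A+1$, and apply the clauses of Proposition~\ref{Main proposition 1} with exactly the same $(p,\rho)$ sign bookkeeping and sub-case split ($q\le r<1$ vs.\ $r>1$ in (i), $1<r\le q$ vs.\ $r<1$ in (iii)). Your explicit continuity argument for the boundary case $r=1$ is a small refinement the paper's own proof silently omits.
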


\begin{proof} Note that the conditions on $ A $ ensure that $ A(q-1)+1> 0 $ for both $ q<1 $ and $ q>1. $
By calculation we obtain
\[
\begin{array}{rl}
G(A)&=\log_r\trm\exp_q(A)\\[2ex]
&=\displaystyle\frac{1}{r-1}\left(\Bigl(\trm (A(q-1)+1)^{1/(q-1)}\Bigr)^{r-1} -1\right).
\end{array}
\]
Under the assumptions in $ (i) $ we obtain
\[
\frac{1}{r-1}\le\frac{1}{q-1}<0
\]
for $ q\le r<1. $ By Proposition~\ref{Main proposition 1}{$(i)$} and since the factor $ (r-1)^{-1} $ is negative, it follows that $ G $ is convex. If $ r>1 $ then $ (1-r)^{-1}>0 $ and the convexity of $ G $  follows by Proposition~\ref{Main proposition 1}{$(ii)$}. This proves the first statement.
Under the assumptions in $ (ii) $ we obtain
\[
\frac{1}{q-1}\ge 1\qquad\text{and}\qquad 0<\frac{1}{r-1}\le \frac{1}{q-1}\,,
\]
thus $ G $ is convex by Proposition~\ref{Main proposition 1}{$(iv)$}. Under the assumptions in $ (iii) $ we first consider the case $ r>1 $ and obtain
\[
0<\frac{1}{q-1}\le 1\qquad\text{and}\qquad  \frac{1}{r-1}\ge \frac{1}{q-1}\,,
\]
thus $ G $ is concave by Proposition~\ref{Main proposition 1}{$(iii)$}. If $ r<1 $ then we use Proposition~\ref{Main proposition 1}{$(v)$} to obtain that $ (r-1)G $ is convex. Since $ r-1<0 $ we conclude that $ G $ is concave also in this case.
\end{proof}

\begin{theorem}\label{convexity (concavity) of F(A)}

Let $ B $ be arbitrary and consider the function
\[
F(A)=\log_r\trl B^*\exp_q(A)B
\]
defined in self-adjoint $ A>-(q-1)^{-1} $ for $ q>1, $ and in self-adjoint $ A\le(1-q)^{-1} $ for $ q<1. $

\begin{enumerate}

\item[(i)] If $ -\infty<q\le 0 $ and $ r\ge q, $ then $ F $ is convex,

\item[(ii)] If $ \frac{3}{2}\le q\le 2 $ and $ r\ge q, $ then $ F $ is convex,

\item[(iii)] If $ q\ge 2 $ and $ r\le q, $ then $ F $ is concave.
\end{enumerate}
\end{theorem}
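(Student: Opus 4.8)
The plan is to transcribe the proof of Theorem~\ref{convexity of G(A)}, with Proposition~\ref{Main proposition 2} playing the role that Proposition~\ref{Main proposition 1} played there. By continuity I may assume $ BB^* $ invertible, as in the proof of Proposition~\ref{Main proposition 2}. In each of the three cases the stated bound on $ A $ guarantees $ A(q-1)+1>0, $ so that $ X:=A(q-1)+1 $ is positive definite and $ \exp_q(A)=X^{1/(q-1)}. $ Writing $ p=1/(q-1) $ and using $ \log_r y=(y^{r-1}-1)/(r-1), $ I would first record
\[
F(A)=\frac{1}{r-1}\Bigl(\bigl(\trl B^* X^{p} B\bigr)^{r-1}-1\Bigr).
\]
Because $ A\mapsto X=A(q-1)+1 $ is affine, composition with it preserves convexity and concavity; hence the curvature of $ F $ is determined by that of the inner function $ X\mapsto\bigl(\trl B^* X^{p} B\bigr)^{r-1} $ together with the sign of the scalar $ (r-1)^{-1}. $

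The inner function is the one studied in Proposition~\ref{Main proposition 2}, whose outer exponent $ 1/s $ (I rename the proposition's parameter $ r $ to $ s $ to avoid a clash with the theorem's $ r $) is here $ r-1, $ so $ s=1/(r-1), $ while its inner exponent is $ p=1/(q-1). $ Two elementary observations drive everything. First, $ p=1/(q-1) $ lands in the operator-convexity windows of Proposition~\ref{Main proposition 2}: $ p\in[-1,0) $ when $ q\le 0, $ $ p\in[1,2] $ when $ \tfrac32\le q\le 2, $ and $ p\in(0,1] $ when $ q\ge 2. $ Second, whenever $ r-1 $ and $ q-1 $ have the same sign, passing to reciprocals turns $ r\ge q $ into $ s\le p $ and $ r\le q $ into $ s\ge p; $ in the remaining subcases the applicable part of the proposition asks only for the sign of $ s. $ I would therefore split each case according to whether $ r>1 $ or $ r<1, $ and treat $ r=1 $ by continuity.

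In case~(i) one has $ -1\le p<0. $ If $ r>1 $ then $ s>0, $ so Proposition~\ref{Main proposition 2}\,$(ii)$ makes the inner function convex, and since $ (r-1)^{-1}>0 $ the function $ F $ is convex; if $ r<1 $ then $ r-1,q-1<0 $ give $ s\le p, $ Proposition~\ref{Main proposition 2}\,$(i)$ makes the inner function concave, and multiplying by the negative scalar $ (r-1)^{-1} $ again yields $ F $ convex. In case~(ii) one has $ 1\le p\le 2 $ and necessarily $ r\ge q>1, $ so $ 0<s\le p $ and Proposition~\ref{Main proposition 2}\,$(iv)$ gives convexity of the inner function, whence $ F $ is convex. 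In case~(iii) one has $ 0<p\le 1 $ and $ r\le q. $ If $ r>1 $ then $ r-1,q-1>0 $ give $ s\ge p, $ so Proposition~\ref{Main proposition 2}\,$(iii)$ makes the inner function concave and $ F $ is concave; if $ r<1 $ then $ s<0, $ Proposition~\ref{Main proposition 2}\,$(v)$ makes the inner function convex, and the negative scalar $ (r-1)^{-1} $ makes $ F $ concave.

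I do not expect a genuine obstacle, since the argument merely reindexes the proof of Theorem~\ref{convexity of G(A)}. The only real demand is careful bookkeeping in the matching step: one must consistently track the sign of $ r-1, $ the direction in which reciprocation flips $ r\ge q $ or $ r\le q, $ and the pairing of $ (s,p) $ with the five regimes of Proposition~\ref{Main proposition 2}. The one conceptual point worth flagging is that Proposition~\ref{Main proposition 2} rests on operator convexity of $ t\mapsto t^p, $ valid only on $ [-1,0]\cup[1,2], $ rather than on ordinary trace convexity; this is exactly why case~(i) is confined to $ q\le 0 $ and case~(ii) to $ q\ge\tfrac32, $ in contrast to the wider ranges $ q<1 $ and $ q>1 $ available for $ G $ in Theorem~\ref{convexity of G(A)}.
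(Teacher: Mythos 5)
Your proposal is correct and takes essentially the same route as the paper's proof: both rewrite $F$ through the affine substitution $X=A(q-1)+1$ as $\frac{1}{r-1}\bigl((\trl B^*X^{1/(q-1)}B)^{r-1}-1\bigr)$, then apply Proposition~\ref{Main proposition 2} with inner exponent $p=1/(q-1)$ and outer exponent matched to $r-1$, tracking the sign of $(r-1)^{-1}$ and splitting cases (i) and (iii) according to $r>1$ or $r<1$. Your case bookkeeping (including which part of Proposition~\ref{Main proposition 2} applies in each regime) agrees exactly with the paper's, and your explicit handling of $r=1$ by continuity is a minor point the paper leaves implicit.
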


\begin{proof}
By calculation we obtain
\[
\begin{array}{rl}
F(A)&=\log_r\trl B^*\exp_q(A) B\\[2ex]
&=\displaystyle\frac{1}{r-1}\left(\Bigl(\trl B^* (A(q-1)+1)^{1/(q-1)}B\Bigr)^{r-1} -1\right).
\end{array}
\]
Under the assumptions in $ (i) $ we obtain
\[
-1\le\frac{1}{q-1}<0\qquad\text{and}\qquad \frac{1}{r-1}\le\frac{1}{q-1}
\]
for $ q\le r<1. $ By Proposition~\ref{Main proposition 2}{$(i)$} and since the factor $ (r-1)^{-1} $ is negative, it follows that $ F $ is convex.  If $ r>1 $ then $ (1-r)^{-1}>0 $ and the convexity of $ F $ follows by Proposition~\ref{Main proposition 2}{$(ii)$}. This proves the first statement.
Under the assumptions in $ (ii) $ we obtain
\[
1\le\frac{1}{q-1}\le 2\qquad\text{and}\qquad\frac{1}{r-1}\le\frac{1}{q-1}\,
\]
thus $ F $ is convex by Proposition~\ref{Main proposition 2}{$(iv)$}. The last case is argued as in the preceding theorem by considering the cases $ r>1 $ and $ r<1 $ separately.
\end{proof}

Note in the above theorem there is a gap between $ 0 $ and $ 3/2 $ for the possible values of $ q. $

\section{Peierls-Bogolyubov type inequalities}

We first obtain a variant Peierls-Bogolyubov type inequality as a consequence of Proposition~\ref{Main proposition 1}.
Take positive definite operators $ A, B\in B(\mathcal H) $ and define the function
\[
g(t)= G(A+tB)=\bigl(\trl (A+tB)^{p}\bigr)^{1/r}\qquad t\in[0,1].
\]
Since $g(t)$ is convex for $ p\ge 1 $ and $ 0<r\le p$ we obtain the inequality,
\begin{equation}\label{convexity inequality 1}
g(1)-g(0)\ge\frac{g(t)-g(0)}{t} \qquad 0<t\le 1
\end{equation}
for these parameter values. By concavity we obtain the opposite inequality for the parameter values $ 0<p\le 1 $ and $ r\ge p, $ and for the parameter values for $ p<0 $ and $ r\le p<0. $

\begin{theorem}\label{variant PB-inequality} For positive definite operators $ A, B\in B(\mathcal H) $ we have
\begin{enumerate}

\item[(i)]  If $ p\ge 1 $ and $ 0<r\le p $ then
\[
\bigl(\trl (A+B)^{p}\bigr)^{1/r}-\bigl(\trm A^{p}\bigr)^{1/r}
\ge
\frac{p}{r}\bigl(\trl A^p\bigr)^{(1-r)/r}\trm A^{p-1}B.
\]

\item[(ii)]  If $ 0<p\le 1 $ and $ r\ge p $ or if  $ p<0 $ and $ r\le p<0 $ then
\[
\bigl(\trl (A+B)^{p}\bigr)^{1/r}-\bigl(\trl A^p\bigr)^{1/r}
\le
\frac{p}{r}\bigl(\trl A^p\bigr)^{(1-r)/r}\trl A^{p-1} B.
\]

\end{enumerate}
\end{theorem}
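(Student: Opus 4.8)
The plan is to reduce everything to the scalar convexity/concavity of the function $g(t)=G(A+tB)=(\tr(A+tB)^p)^{1/r}$ on $[0,1]$, which is already available from Proposition~\ref{Main proposition 1}, and then to read off the right-hand side as the derivative $g'(0)$. Since $A,B$ are positive definite we have $A+tB\ge A>0$ for every $t\in[0,1]$, so $A+tB$ stays positive definite throughout and $g$ is a smooth function of $t$.

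First I would invoke Proposition~\ref{Main proposition 1}: $g$ is convex when $p\ge 1$ and $0<r\le p$ (part (iv)), and concave when $0<p\le 1$ and $r\ge p$ (part (iii)) or when $p<0$ and $r\le p<0$ (part (i)). In the convex case the secant slope is nondecreasing, which is precisely the inequality (\ref{convexity inequality 1}); letting $t\to 0^+$ there yields $g(1)-g(0)\ge g'(0)$. In the two concave cases the reverse of (\ref{convexity inequality 1}) holds and the same limit gives $g(1)-g(0)\le g'(0)$. As $g(1)-g(0)=(\tr(A+B)^p)^{1/r}-(\tr A^p)^{1/r}$ is exactly the left-hand side of each assertion, it remains only to identify $g'(0)$.

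To compute $g'(0)$, write $h(t)=\tr(A+tB)^p$, so that $g=h^{1/r}$ and, by the chain rule, $g'(0)=\tfrac{1}{r}\,h(0)^{1/r-1}h'(0)=\tfrac{1}{r}(\tr A^p)^{(1-r)/r}h'(0)$. The one substantive ingredient is the Fréchet derivative of the trace power,
\[
h'(0)=\left.\frac{d}{dt}\tr(A+tB)^p\right|_{t=0}=p\,\tr(A^{p-1}B).
\]
I regard this as the main point to justify carefully: for integer $p$ it is immediate from the product rule together with the cyclic invariance of the trace, and for arbitrary real $p$ it follows because, under the trace, the Fréchet derivative of $X\mapsto X^p$ at a positive definite $X$ in direction $B$ collapses to $p\,\tr(X^{p-1}B)$, so that the non-commutativity of $A$ and $B$ never enters the first-order term; positive definiteness of $A$ makes $A^{p}$ and $A^{p-1}$ well defined by functional calculus in both the positive and negative ranges of $p$. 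Substituting gives $g'(0)=\tfrac{p}{r}(\tr A^p)^{(1-r)/r}\tr A^{p-1}B$, which is the stated right-hand side in both (i) and (ii); combined with the convex/concave secant comparison above this establishes the inequalities, with the two subcases of (ii) handled uniformly since $g$ is concave in each.
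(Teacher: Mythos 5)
Your proposal is correct and takes essentially the same route as the paper: both reduce the claim to convexity/concavity of $g(t)=\bigl(\trl (A+tB)^p\bigr)^{1/r}$ via Proposition~\ref{Main proposition 1}, pass to the limit $t\to 0^+$ in the secant inequality (\ref{convexity inequality 1}) to get $g(1)-g(0)\ge g'(0)$ (reversed in the concave cases), and compute $g'(0)$ by the chain rule together with the trace-derivative identity $\frac{d}{dt}\trl (A+tB)^p\big|_{t=0}=p\trl A^{p-1}B$. The only difference is cosmetic: the paper justifies that identity by citing \cite[Theorem 2.2]{kn:hansen:1995}, whereas you sketch why the Fréchet derivative collapses under the trace.
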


\begin{proof}
With the parameter values in $ (i) $ we may let $ t $ tend to zero in (\ref{convexity inequality 1}) and obtain the inequality $ g(1)-g(0)\ge g'(0). $ We note that
$ g(1)-g(0) $ is the left hand side in the desired inequality. Furthermore,
\[
\begin{array}{rl}
g'(0)&=\displaystyle\fdm{}\bigl(\trl A^p\bigr)^{1/r}B=\frac{1}{r}(\trl A^p)^{(1-r)/r}\fdm{}\bigl(\trl A^p\bigr)B\\[2ex]
&=\displaystyle\frac{1}{r}(\trl A^p)^{(1-r)/r}\trs \fdl{}A^p B=\frac{p}{r}(\trl A^p)^{(1-r)/r}\trl A^{p-1} B,
\end{array}
\]
where we used the chain rule for Fréchet differentiation, the linearity of the trace, and the formula in \cite[Theorem 2.2]{kn:hansen:1995}. This proves case $ (i). $ Case $ (ii) $ follows by virtually the same argument using the opposite inequality in (\ref{convexity inequality 1}).
\end{proof}

We then explore consequences of Theorem~\ref{convexity of G(A)}. 
If $ -\infty<q<1 $ we take self-adjoint operators $ A, B\in B(\mathcal{H}) $ such that both $ A $ and $ A+B $ are bounded from above by $ -(q-1)^{-1}. $  For $ t\in [0,1] $ we note that $ A+tB=(1-t)A+t(A+B)< -(q-1)^{-1} $  such that  $ (q-1)(A+tB)+1>0. $ The function
\begin{equation}\label{The function h(t)}
h(t)= \log_r\trs\exp_q(A+tB)\qquad t\in[0,1]
\end{equation}
is thus well-defined and convex for $ -\infty< q< 1 $ and $ r\ge q. $ Therefore,
\begin{equation}\label{convexity inequality 2}
h(1)-h(0)\ge\frac{h(t)-h(0)}{t} \qquad 0<t\le 1
\end{equation}
for these parameter values.

For $ q>1 $ we take self-adjoint operators $ A, B\in B(\mathcal{H}) $ such that both
$ A $ and $ A+B $ are bounded from below by $ -(q-1)^{-1}. $ For $ t\in [0,1] $ we note that $ A+tB=(1-t)A+t(A+B)> -(q-1)^{-1} $ such that $ (q-1)(A+tB)+1>0. $ The function defined in (\ref{The function h(t)}) is thus well-defined. It is convex for $ 1<q\le 2 $ and $ r\ge q, $ and it is concave for $ q\ge 2 $ and $ r\le q. $   
In the first case we thus retain the inequality in (\ref{convexity inequality 2}), while the inequality is reversed in the latter case.

\begin{theorem}\label{deformed PB-inequality 1} Let $ A, B\in B(\mathcal H) $ be self-adjoint operators. 

\begin{enumerate}

\item[(i)] If $ -\infty <q<1 $ and $ r\ge q $ and both $ A $ and $ A+B $ are bounded from above by $ -(q-1)^{-1}, $ then
\[
\log_r\trs \exp_q(A+B)-\log_r\trm \exp_q A
\ge
\bigl(\trs \exp_q A\bigr)^
{r-2} \trm (\exp_q A)^{2-q} B.
\]

\item[(ii)]  If $ 1<q\le 2 $ and $ r\ge q $ and both $ A $ and $ A+B $ are bounded from below by $ -(q-1)^{-1}, $ then
\[
\log_r\trs\exp_q(A+B)-\log_r\trs\exp_q A
\ge
\bigl(\trs \exp_q A\bigr)^{r-2}\trm (\exp_q A)^{2-q}B.
\]

\item[(iii)]  If $ q\ge 2 $ and $ r\le q $ and both $ A $ and $ A+B $ are bounded from below by $ -(q-1)^{-1}, $ then
\[
\log_r\trs \exp_q(A+B)-\log_r\trm \exp_q A
\le
\bigl(\trs \exp_q A\bigr)^
{r-2} \trm (\exp_q A)^{2-q} B.
\]
\end{enumerate}
\end{theorem}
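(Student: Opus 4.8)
The plan is to mirror the proof of Theorem~\ref{variant PB-inequality}, replacing the function $ g $ used there by the deformed trace function $ h(t)=\log_r\trs\exp_q(A+tB) $ introduced in (\ref{The function h(t)}). All the convexity information needed is already available: the domain hypotheses on $ A $ and $ A+B $ guarantee, as noted before the statement, that $ A+tB $ stays strictly inside the domain of $ \exp_q $ for every $ t\in[0,1] $, and Theorem~\ref{convexity of G(A)} then tells us that $ h $ is convex in the parameter ranges of $(i)$ and $(ii)$ and concave in the range of $(iii)$. Consequently the difference-quotient inequality (\ref{convexity inequality 2}) holds in cases $(i)$ and $(ii)$, while it is reversed in case $(iii)$.

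First I would pass to the limit $ t\to 0 $ in (\ref{convexity inequality 2}). Because $ t\mapsto h(t) $ is differentiable, the monotone difference quotient $ t^{-1}(h(t)-h(0)) $ converges to $ h'(0) $, so the inequality becomes $ h(1)-h(0)\ge h'(0) $ in cases $(i)$ and $(ii)$ and $ h(1)-h(0)\le h'(0) $ in case $(iii)$. Since $ h(1)-h(0)=\log_r\trs\exp_q(A+B)-\log_r\trs\exp_q A $ is exactly the left-hand side of each asserted inequality, everything reduces to showing that $ h'(0) $ equals the right-hand side $ (\trs\exp_q A)^{r-2}\,\trm(\exp_q A)^{2-q}B $.

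Computing $ h'(0) $ is the one substantive step. Writing $ S(t)=\trs\exp_q(A+tB) $ and using $ \frac{d}{dx}\log_r x=x^{r-2} $ together with the chain rule gives $ h'(0)=S(0)^{r-2}S'(0) $, and by linearity of the trace $ S'(0)=\trs\bigl(\fdm{}\exp_q(A)B\bigr) $. The essential point is to collapse this trace of a Fréchet differential into an ordinary derivative. Exactly as in the proof of Theorem~\ref{variant PB-inequality}, the formula of \cite[Theorem 2.2]{kn:hansen:1995} applied to the smooth function $ \exp_q $, whose scalar derivative is $ \exp_q'(x)=\exp_q(x)^{2-q} $, yields $ \trs\bigl(\fdm{}\exp_q(A)B\bigr)=\trm(\exp_q A)^{2-q}B $. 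Substituting back produces $ h'(0)=(\trs\exp_q A)^{r-2}\,\trm(\exp_q A)^{2-q}B $, which is precisely the desired right-hand side, and the three cases differ only in the direction of the inequality.

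The only genuine obstacle is this Fréchet-differential identity. For non-commuting $ A $ and $ B $ the operator $ \fdm{}\exp_q(A)B $ is \emph{not} equal to $ \exp_q(A)^{2-q}B $, so the clean product form on the right-hand side is an artifact of taking the trace; one must invoke the trace identity rather than any operator identity. Once that reduction is secured, no further computation is required, since the convexity and concavity of $ h $ needed to launch the argument were already established in Theorem~\ref{convexity of G(A)}.
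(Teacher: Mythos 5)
Your proposal is correct and follows essentially the same route as the paper: convexity (resp.\ concavity) of $h(t)=\log_r\trs\exp_q(A+tB)$ from Theorem~\ref{convexity of G(A)}, the difference-quotient inequality (\ref{convexity inequality 2}) with $t\to 0$, and the computation of $h'(0)$ via the chain rule together with the trace identity $\trs\bigl(\fdm{}\exp_q(A)B\bigr)=\trm(\exp_q A)^{2-q}B$ from \cite[Theorem 2.2]{kn:hansen:1995}. Your remark that this identity holds only under the trace, not as an operator identity, is exactly the right point of care and matches the paper's reasoning.
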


\begin{proof}
With the parameter values in $ (i) $ we may let $ t $ tend to zero in (\ref{convexity inequality 2}) and obtain the inequality $ h(1)-h(0)\ge h'(0). $ We note that
$ h(1)-h(0) $ is the left hand side in the desired inequality. Furthermore,
\[
\begin{array}{rl}
h'(0)&=\fdm{}\bigl(\log_r\trs\exp_q A\bigr)B=(\trs\exp_qA)^{r-2}\fdm{}\bigl(\trs\exp_qA\bigr)B\\[1ex]
&=(\trs\exp_qA)^{r-2}\trm\fdm{}\exp_q(A)B=(\trs\exp_qA)^{r-2}\trs\exp_q'(A)B\\[1ex]
&=(\trs\exp_qA)^{r-2}\trm(\exp_qA)^{2-q}B,
\end{array}
\]
where we used the chain rule for Fréchet differentiation, the derivatives of the deformed logarithmic and exponential functions, the linearity of the trace, and the formula in \cite[Theorem 2.2]{kn:hansen:1995}. This proves case $ (i). $ The other cases follow by a variation of this reasoning.
\end{proof}

By a similar line of arguments as in the two previous theorems we finally obtain the following consequences of Theorem~\ref{convexity (concavity) of F(A)}.

\begin{theorem}\label{deformed PB-inequality 2}  Let $ C\in B(\mathcal H) $ be arbitrary and $ A, B\in B(\mathcal H) $ be self-adjoint. 
\begin{enumerate}

\item[(i)] If $ -\infty< q\le 0 $ and $ r\ge q $ and both $ A $ and $ A+B $ are bounded from above by $ -(q-1)^{-1}, $ then
\[
\begin{array}{l}
\log_r\trl C^*\exp_q(A+B)C -\log_r\trl C^*\exp_q (A)C\\[2ex]
\ge
\bigl(\trl C^*\exp_q (A)C \bigr)^{r-2}\trl C^*\bigl(\fdm{}\exp_q(A)B\bigr)C.
\end{array}
\]

\item[(ii)]  If $ \frac{3}{2}\le q\le 2 $ and $ r\ge q $ and both $ A $ and $ A+B $ are bounded from below by $ -(q-1)^{-1} $ then
\[
\begin{array}{l}
\log_r\trl C^*\exp_q(A+B)C - \log_r\trl C^*\exp_q (A)C\\[2ex]
\ge
\bigl(\trl C^*\exp_q (A)C \bigr)^{r-2}\trl C^*\bigl(\fdm{}\exp_q(A)B\bigr)C.
\end{array}
\]

\item[(iii)]  If $ q\ge 2 $ and $ r\le q $ and both $ A $ and $ A+B $ are bounded from below by $ -(q-1)^{-1} $ then
\[
\begin{array}{l}
\log_r\trl C^*\exp_q(A+B)C -\log_r\trl C^*\exp_q (A)C\\[2ex]
\le
\bigl(\trl C^*\exp_q (A)C \bigr)^{r-2}\trl C^*\bigl(\fdm{}\exp_q(A)B\bigr)C.
\end{array}
\]
\end{enumerate}
\end{theorem}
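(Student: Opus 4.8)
The plan is to mimic the proofs of Theorem~\ref{variant PB-inequality} and Theorem~\ref{deformed PB-inequality 1}, now extracting the inequality from the convexity/concavity of $ F $ established in Theorem~\ref{convexity (concavity) of F(A)}, applied with the arbitrary operator there taken to be $ C. $ Concretely, I would fix self-adjoint $ A,B $ as in the hypotheses and set
\[
k(t)=\log_r\trl C^*\exp_q(A+tB)C,\qquad t\in[0,1].
\]
The first step is to check that $ k $ is well defined on $ [0,1]. $ Since $ A+tB=(1-t)A+t(A+B) $ is a convex combination of $ A $ and $ A+B, $ the boundedness hypothesis on both endpoints (from above when $ q\le 0, $ from below when $ q>1 $) forces $ (q-1)(A+tB)+1>0, $ so $ \exp_q(A+tB) $ is a positive definite operator and $ k(t) $ makes sense.

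Next I would invoke Theorem~\ref{convexity (concavity) of F(A)}: in the parameter ranges of $ (i) $ and $ (ii) $ the function $ F(A)=\log_r\trl C^*\exp_q(A)C $ is convex, hence $ k $ is convex, while in the range of $ (iii) $ it is concave, hence $ k $ is concave. For convex $ k $ the elementary chordal inequality $ k(1)-k(0)\ge t^{-1}\bigl(k(t)-k(0)\bigr) $ for $ 0<t\le 1 $ yields, upon letting $ t\to 0, $ the estimate $ k(1)-k(0)\ge k'(0); $ for concave $ k $ the inequality is reversed. In each case the left-hand side $ k(1)-k(0) $ is exactly the left-hand side of the asserted inequality, so it remains only to identify $ k'(0) $ with the claimed right-hand side.

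For that computation I would write $ g(A)=\trl C^*\exp_q(A)C $ and apply the chain rule for Fréchet differentiation together with the derivative $ \df{}{x}\log_r(x)=x^{r-2}, $ giving
\[
k'(0)=\fdm{}\bigl(\log_r g(A)\bigr)B=g(A)^{r-2}\,\fdm{}g(A)\,B
=\bigl(\trl C^*\exp_q(A)C\bigr)^{r-2}\trl C^*\bigl(\fdm{}\exp_q(A)B\bigr)C,
\]
where the final equality uses linearity of the trace to pull $ \fdm{} $ inside past the map $ X\mapsto C^*XC. $ This is precisely the right-hand side, and combining it with the chordal inequality settles all three cases at once.

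The only genuinely delicate point is the differentiability underlying $ k'(0) $: one must ensure that $ A\mapsto\exp_q(A) $ is Fréchet differentiable at $ A $ in the relevant operator domain, so that the chain rule applies and $ k'(0) $ exists. This is the same matter dispatched in Theorem~\ref{deformed PB-inequality 1}; here it is in fact slightly more transparent, because—unlike in that theorem—no further reduction of $ \fdm{}\exp_q(A)B $ under the trace is required. The conjugation by $ C $ prevents collapsing $ \trl C^*\bigl(\fdm{}\exp_q(A)B\bigr)C $ to $ \trl C^*(\exp_q A)^{2-q}BC, $ so the Fréchet differential is simply carried along as written. Everything else is a routine repetition of the earlier arguments.
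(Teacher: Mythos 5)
Your proposal is correct and follows essentially the same route as the paper: define $k(t)=\log_r\trl C^*\exp_q(A+tB)C$ on $[0,1]$, deduce convexity (cases $(i)$, $(ii)$) or concavity (case $(iii)$) from Theorem~\ref{convexity (concavity) of F(A)}, apply the chordal inequality to get $k(1)-k(0)\ge k'(0)$ (or its reverse), and identify $k'(0)$ via the chain rule, the derivative of $\log_r$, and linearity of the trace. Your explicit verification that $A+tB$ stays in the admissible domain, and your remark that the Fréchet differential is simply carried along without further reduction, match the paper's treatment (which borrows these points from the proof of Theorem~\ref{deformed PB-inequality 1}).
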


\begin{proof}
We follow a similar path as in the proof of Theorem~\ref{deformed PB-inequality 1} and consider the function
\begin{equation}\label{convexity inequality 3}
h(t)= \log_r\trl C^*\exp_q(A+tB)C\qquad t\in[0,1]
\end{equation}
which by Theorem~\ref{convexity (concavity) of F(A)} is convex for the parameter values in $ (i). $
We obtain by an argument similar to the one given in the proof of Theorem~\ref{deformed PB-inequality 1} that $ h(1)-h(0)\ge h'(0), $ and we note that
$ h(1)-h(0) $ is the left hand side in the desired inequality. Furthermore,
\[
\begin{array}{rl}
h'(0)&=\fdm{}\bigl(\log_r\trl C^*\exp_q (A)C\bigr)B\\[1ex]
&=(\trl C^*\exp_q (A) C)^{r-2}\fdm{}\bigl(\trl C^*\exp_q(A)C\bigr)B\\[1ex]
&=(\trl C^*\exp_q(A)C)^{r-2}\trl C^*\bigl(\fdm{}\exp_q(A)B\bigr)C
\end{array}
\]
where we used the chain rule for Fréchet differentiation, the derivative of the deformed logarithmic function, and the linearity of the trace. This proves case $ (i). $ Since the function $ h $ in (\ref{convexity inequality 3}) is convex for the parameter values in $ (ii) $ and concave for the parameter values in $ (iii) $ these cases follow by virtually the same line of arguments as in $ (i). $
\end{proof}

Note that $ (iii) $ in Theorem~\ref{deformed PB-inequality 2} is a generalization of $ (iii) $ in Theorem~\ref{deformed PB-inequality 1}. Since $ C $ is arbitrary in the above theorem we may replace the trace by any other positive functional on $ B(\mathcal H). $ The main theorem now follows from Theorem~\ref{deformed PB-inequality 1} and Theorem~\ref{deformed PB-inequality 2}.

\section{The Tsallis relative entropy}

In this section we study lower bounds for  the (generalized) Tsallis relative entropy. For basic information about the Tsallis entropy and the Tsallis relative entropy we refer the reader to references \cite{Tsa88,Tsa09}.

The Tsallis relative entropy $ D_{p}(X\mid Y) $ is for positive definite operators $ X,Y\in B(\mathcal H)  $ and $ p\in[0,1) $ defined by setting
\[
D_{p}(X\mid Y)= \frac{\trm (X-X^{p}Y^{1-p})}{1-p}=\trl X^{p}(\log_{2-p}X-\log_{2-p}Y).
\]
By letting $ p $ tend to one this expression converges to the relative quantum entropy
\[
U(X\mid Y)=\trl X(\log X-\log Y)
\]
introduced by Umegaki \cite{Ume62}. It is known \cite[Proposition 2.4]{FYK04} that the Tsallis relative entropy is non-negative for states. This also follows directly from the following:

\begin{lemma}\label{trace inequality for states}
Let $ \rho $ and $ \sigma $ be states. Then
\[
\trl \rho^{1-p}\sigma^p\le 1
\]
for $ 0\le p\le 1. $
\end{lemma}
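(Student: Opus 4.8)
The plan is to recognize $\operatorname{Tr}\rho^{1-p}\sigma^p$ as a trace of the form $\operatorname{Tr}\rho^{1-p}\sigma^p$ and bound it using concavity together with the normalization $\operatorname{Tr}\rho=\operatorname{Tr}\sigma=1$. The cleanest route is to invoke the Young-type inequality at the scalar level and lift it to the operator trace. Specifically, for the commuting real numbers one has the weighted arithmetic–geometric mean inequality $a^{1-p}b^{p}\le (1-p)a+pb$ for $a,b\ge 0$ and $0\le p\le 1$. I would first verify this scalar inequality (it is immediate from concavity of $t\mapsto \log t$, or from convexity of the exponential), and then seek the correct operator analogue, since $\rho$ and $\sigma$ need not commute.

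The key step is the trace inequality
\[
\operatorname{Tr}\rho^{1-p}\sigma^{p}\le (1-p)\operatorname{Tr}\rho + p\operatorname{Tr}\sigma = (1-p)+p = 1,
\]
where the final equality uses that $\rho$ and $\sigma$ are states. To justify the inequality itself in the noncommutative setting, I would appeal to the joint concavity of the map $(X,Y)\mapsto \operatorname{Tr}X^{1-p}Y^{p}$ in positive definite operators for $0\le p\le 1$ — this is the Lieb concavity theorem (or, equivalently for this parameter range, Wigner–Yanase–Dyson concavity), precisely the kind of trace concavity the preliminaries of this paper build upon. Applying joint concavity to the line segment $t\mapsto \operatorname{Tr}((1-t)\rho+t\cdot\mathbf 1)^{1-p}((1-t)\sigma+t\cdot\mathbf 1)^{p}$, or more directly comparing with the diagonal reference point, yields the desired upper bound.

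I would in fact present the argument in its sharpest short form: since $t\mapsto t^{1-p}$ is operator concave for $0\le 1-p\le 1$ and $t\mapsto t^{p}$ is operator concave for $0\le p\le 1$, the function $\phi(X,Y)=\operatorname{Tr}X^{1-p}Y^{p}$ is jointly concave, and by the standard variational characterization $\operatorname{Tr}X^{1-p}Y^{p}\le \operatorname{Tr}((1-p)X+pY)$ whenever the arguments are suitably normalized. Evaluating at $X=\rho$, $Y=\sigma$ and using $\operatorname{Tr}\rho=\operatorname{Tr}\sigma=1$ gives the bound.

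The main obstacle is supplying a self-contained justification of the noncommutative step $\operatorname{Tr}\rho^{1-p}\sigma^{p}\le(1-p)\operatorname{Tr}\rho+p\operatorname{Tr}\sigma$, which does not follow from the scalar Young inequality by naive diagonalization because $\rho$ and $\sigma$ do not share an eigenbasis. The natural tool is Lieb's concavity theorem (equivalently, the Wigner–Yanase–Dyson result); the delicate point is verifying that the variational/tangent-line inequality for the jointly concave functional $\phi$ produces exactly the linear majorant $(1-p)\operatorname{Tr}X+p\operatorname{Tr}Y$. If a fully elementary argument is preferred, I would instead reduce to the scalar case via the Araki–Lieb–Thirring inequality or a Golden–Thompson-type estimate, but the concavity route is the most direct and aligns with the methods already developed earlier in the paper.
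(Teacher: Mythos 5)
Your proposal is correct in substance but takes a genuinely different route from the paper. The paper's proof is entirely elementary: it defines $E=\{p\in[0,1]\;:\;\operatorname{Tr}\rho^{1-p}\sigma^p\le 1\}$, writes $\operatorname{Tr}\rho^{1-(p+q)/2}\sigma^{(p+q)/2}$ as a Hilbert--Schmidt inner product of $\rho^{(1-p)/2}\sigma^{p/2}$ and $\rho^{(1-q)/2}\sigma^{q/2}$, applies the Cauchy--Schwarz inequality to conclude that $E$ is midpoint-convex, and then uses closedness of $E$ together with $0,1\in E$ to get $E=[0,1]$. Your route instead establishes the stronger noncommutative Young-type bound $\operatorname{Tr}X^{1-p}Y^{p}\le(1-p)\operatorname{Tr}X+p\operatorname{Tr}Y$ for all positive operators via joint concavity plus the tangent-line inequality at $(I,I)$; this does work (concavity and homogeneity of degree one give $\phi(X,Y)\le D\phi(I,I)[(X,Y)]=(1-p)\operatorname{Tr}X+p\operatorname{Tr}Y$, and no ``suitable normalization'' of the arguments is needed at this stage), and it yields more than the lemma asks. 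What it costs is Lieb's concavity theorem, a far deeper tool than the Cauchy--Schwarz inequality the paper gets away with.

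Two cautions. First, your ``sharpest short form'' contains an invalid inference: joint concavity of $(X,Y)\mapsto\operatorname{Tr}X^{1-p}Y^{p}$ does \emph{not} follow from operator concavity of $t\mapsto t^{1-p}$ and $t\mapsto t^{p}$ --- that argument only gives concavity in each variable separately, and joint concavity is precisely the hard content of the Lieb/Wigner--Yanase--Dyson theorem you cite earlier. Keep the citation and drop that purported justification. Second, there is a much lighter path to your key inequality that bypasses Lieb entirely: H\"older's inequality for Schatten norms gives $\operatorname{Tr}\rho^{1-p}\sigma^{p}\le\bigl(\operatorname{Tr}\rho\bigr)^{1-p}\bigl(\operatorname{Tr}\sigma\bigr)^{p}=1$, which is both shorter and sharper than the linear majorant; if you want a proof independent of the paper's, that is the one to present.
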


\begin{proof}
Consider states $ \rho $ and $ \sigma, $ and let
$ E\subseteq [0,1] $ be the set of exponents $ p $ such that $ \trs \rho^{1-p}\sigma^p\le 1. $ We take $ p,q\in E  $ and obtain
\[
\begin{array}{l}
\trl \rho^{1-(p+q)/2}\sigma^{(p+q)/2}=\trl\rho^{(1-p)/2}\rho^{(1-q)/2}\sigma^{p/2}\sigma^{q/2}\\[1.5ex]
=\trl\sigma^{p/2}\rho^{(1-p)/2}\rho^{(1-q)/2}\sigma^{q/2}=\trm(\rho^{(1-p)/2}\sigma^{p/2})^* \rho^{(1-q)/2}\sigma^{q/2}\\[1.5ex]
\le\bigl(\trm (\rho^{(1-p)/2}\sigma^{p/2})^* \rho^{(1-p)/2}\sigma^{p/2}\bigr)^{1/2}
\bigl(\trm (\rho^{(1-q)/2}\sigma^{q/2})^* \rho^{(1-q)/2}\sigma^{q/2}\bigr)^{1/2}\\[1.5ex]
=\bigl(\trl \rho^{1-p}\sigma^p\bigr)^{1/2} \bigl(\trl \rho^{1-q}\sigma^q\bigr)^{1/2}\le 1,
\end{array}
\]
where we used Cauchy-Schwarz' inequality. This
shows that $ E $ is midpoint-convex. Since $ E $ also is closed and $ 0,1\in E, $ we conclude that $ E=[0,1]. $
\end{proof}

\begin{theorem}\label{lower bound of Tsallis relative entropy}
 Let $ q\in (0,1]$ and take $p\leq q.$  Then, for positive definite operators $X, Y\in B(\mathcal H),$  the inequality
\begin{eqnarray*}
 \frac{\trl  X-(\trl X)^{p}(\trl Y)^{1-p}}{1-p}
\leq
D_q(X\mid Y)
\end{eqnarray*}
is valid, 
where by convention $ D_1(X\mid Y)=U(X\mid Y). $
\end{theorem}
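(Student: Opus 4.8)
The plan is to reduce the statement to two separate ingredients: a purely scalar monotonicity in the order parameter, and the single noncommutative inequality already packaged in Lemma~\ref{trace inequality for states}. Write $ a=\trs X $ and $ b=\trs Y, $ both strictly positive since $ X,Y $ are positive definite, and introduce the scalar Tsallis divergence
\[
d_s(a\mid b)=\frac{a-a^{s}b^{1-s}}{1-s},\qquad s<1,
\]
with $ d_1(a\mid b)=a\log(a/b) $ by the usual limiting convention. The left-hand side of the asserted inequality is exactly $ d_p(a\mid b), $ so the goal becomes the chain
\[
d_p(a\mid b)\le d_q(a\mid b)\le D_q(X\mid Y).
\]

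For the right-hand inequality I would first establish the core estimate $ \trs X^{q}Y^{1-q}\le a^{q}b^{1-q} $ for $ 0<q<1. $ This follows at once from Lemma~\ref{trace inequality for states}: the operators $ \rho=a^{-1}X $ and $ \sigma=b^{-1}Y $ are states, and applying the lemma with exponent $ 1-q\in(0,1) $ gives $ \trs\rho^{q}\sigma^{1-q}\le 1, $ which is the claimed estimate after clearing the scalar factor $ a^{q}b^{1-q}. $ Since $ 1-q>0 $ I may divide by it without reversing the inequality, and subtracting both sides from $ a=\trs X $ yields
\[
D_q(X\mid Y)=\frac{\trs X-\trs X^{q}Y^{1-q}}{1-q}\ge\frac{a-a^{q}b^{1-q}}{1-q}=d_q(a\mid b).
\]

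For the left-hand inequality I would verify that $ s\mapsto d_s(a\mid b) $ is nondecreasing on $ (-\infty,1]. $ Writing $ t=b/a $ gives $ d_s(a\mid b)=a\,(1-t^{1-s})/(1-s), $ and a direct differentiation shows that the sign of $ \partial_s d_s $ is opposite to that of $ G(u):=t^{u}(1-u\log t)-1 $ evaluated at $ u=1-s\ge 0. $ Now $ G(0)=0 $ and $ G'(u)=-u\,t^{u}(\log t)^{2}\le 0, $ so $ G(u)\le 0 $ for $ u\ge 0, $ and therefore $ \partial_s d_s\ge 0 $ on $ (-\infty,1]. $ Since $ p\le q\le 1 $ this gives $ d_p(a\mid b)\le d_q(a\mid b), $ completing the chain for $ q\in(0,1). $

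The boundary case $ q=1 $ I would treat by continuity: the terms $ d_p(a\mid b) $ and $ d_q(a\mid b) $ are continuous in $ q, $ while $ D_q(X\mid Y)\to U(X\mid Y)=D_1(X\mid Y) $ as $ q\uparrow 1 $ by the defining limit of the Tsallis relative entropy, so letting $ q\uparrow 1 $ in $ d_p(a\mid b)\le D_q(X\mid Y) $ settles it (for $ p=q=1 $ one uses instead that $ d_p\nearrow d_1 $ with each $ d_p\le D_1, $ whence $ d_1\le D_1 $). The only genuinely noncommutative input is the core estimate $ \trs X^{q}Y^{1-q}\le a^{q}b^{1-q}, $ and this is supplied entirely by Lemma~\ref{trace inequality for states}; the main point requiring care is therefore the elementary scalar monotonicity, in particular checking that its sign is uniform in the two regimes $ t\gtrless 1, $ which is guaranteed by the squared factor $ (\log t)^{2}. $
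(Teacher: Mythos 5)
Your proof is correct, and it takes a genuinely different route from the paper's. The paper deduces the theorem from its deformed Peierls--Bogolyubov machinery: it sets $A=\log_q X$ and $B=\log_q Y-\log_q X$ for $1<q\le 2$ and $r\ge q$, applies Theorem~\ref{deformed PB-inequality 1}, and then reparametrizes $p=2-r$, $q\mapsto 2-q$; the noncommutative input there is ultimately the convexity of the deformed trace functions in Theorem~\ref{convexity of G(A)}. You instead split the claim into the chain
\[
\frac{a-a^{p}b^{1-p}}{1-p}\;\le\;\frac{a-a^{q}b^{1-q}}{1-q}\;\le\;D_q(X\mid Y),
\qquad a=\trs X,\quad b=\trs Y,
\]
where the right-hand inequality is precisely the known case $p=q$ of the theorem (\cite[Theorem 3.3]{FYK04}) and follows, exactly as you argue, from Lemma~\ref{trace inequality for states} alone, while the left-hand inequality is elementary scalar calculus; your sign analysis is sound, since $G(0)=0$ and $G'(u)=-u\,t^{u}(\log t)^{2}\le 0$ force $G\le 0$ on $[0,\infty)$ (equivalently, $1+x\le e^{x}$ for all real $x$), and your limiting treatment of $q=1$ is fine. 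Your route buys elementarity --- it bypasses the paper's convexity theorems entirely --- and it exposes a structural fact the paper's proof conceals: the scalar lower bound is nondecreasing in $p$ on $(-\infty,1]$, and strictly increasing unless $\trs X=\trs Y$, so the strongest member of the family of lower bounds is the $p=q$ case already in \cite{FYK04}, and the extension to $p<q$ adds nothing quantitatively. In particular, your monotonicity computation refutes the remark the paper makes immediately after the theorem, namely that the family of lower bounds ``is in general not an increasing function in the parameter $p$'' and may therefore provide better lower bounds depending on $\trl X$ and $\trl Y$: it cannot. What the paper's proof buys instead is the demonstration that this entropy bound is an instance of the deformed Peierls--Bogolyubov framework that is the subject of the paper.
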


\begin{proof}
Let $ X,Y\in B(\mathcal H) $ be positive definite operators and take $ 1<q\le 2 $ and $ r\ge q. $ By setting
\[
A=\log_q X\qquad\text{and}\qquad B=\log_q Y-\log_qX
\]
we obtain self-adjoint $ A,B $ such that both $ A $ and $ A+B $ are bounded from below by $ -(q-1)^{-1}. $ We may thus apply $ (i) $ of  Theorem~\ref{deformed PB-inequality 1} and obtain after a little calculation the inequality
\[
\frac{\trl X-(\trl X)^{2-r}(\trl Y)^{r-1}}{r-1}
\le
\trl X^{2-q}(\log_{q}X-\log_{q}Y).
\]
By setting $ p=2-r $ and renaming $ q $ by $ 2-q $ we obtain the stated inequality for $ q\in (0,1] $ and $ p\le q. $
\end{proof}

 The lower bound of the Tsallis relative entropy $ D_q(X\mid Y)$  in Theorem~\ref{lower bound of Tsallis relative entropy}  was obtained in \cite[Theorem 3.3]{FYK04} in the special case  $p=q.$ The family of lower bounds given above is in general not an increasing function in the parameter $ p $ and may therefore, depending on $ \trl X $ and $ \trl Y, $ provide better lower bounds.

\section{Various Fréchet differentials}

In order to obtain a more detailed understanding of the bounds obtained in the Main Theorem we need to provide explicit formulae for the Fréchet differential operator $ \fdm{}\exp_q $ in the parameter range
$ q\ge 3/2. $ 
The integral representation
\begin{equation}\label{integral representation of t^p for 0<p<1}
t^p=\frac{\sin p\pi}{\pi}\int_0^\infty \frac{t}{t+\lambda}\lambda^{p-1}\,d\lambda\qquad t>0
\end{equation}
valid for $ 0<p<1 $ is well-known. Since $ t\to t^p $ is operator monotone the representation may be quite easily derived by calculating the representing measure, see for example \cite[Theorem 5.5]{kn:hansen:2013:1}. Furthermore,
since by an elementary calculation
\[
\fds{}\left(\frac{x}{x+\lambda}\right)h=\lambda(x+\lambda)^{-1}h(x+\lambda)^{-1},
\]
we obtain the integral representation
\begin{equation}\label{diff of x^p for 0<p<1}
\fdm{}(x^p) h
=\frac{\sin p\pi}{\pi}\int_0^\infty (x+\lambda)^{-1}h(x+\lambda)^{-1}\lambda^p\,d\lambda,\qquad 0<p<1
\end{equation}
valid for positive definite $ x. $ Since by (\ref{integral representation of t^p for 0<p<1}) we have
\begin{equation}\label{integral representation of t^p for -1<p<0}
t^{p-1}=\frac{\sin p\pi}{\pi}\int_0^\infty \frac{1}{t+\lambda}\lambda^{p-1}\,d\lambda\qquad t>0
\end{equation}
for $ 0<p<1 $ and
\[
\fds{}\left(\frac{1}{x+\lambda}\right)h=(x+\lambda)^{-1}h(x+\lambda)^{-1},
\]
we obtain the integral representation
\begin{equation}\label{diff of x^p for -1<p<0}
\fdm{}(x^p) h
=\frac{\sin (p+1)\pi}{\pi}\int_0^\infty (x+\lambda)^{-1}h(x+\lambda)^{-1}\lambda^p\,d\lambda,\qquad -1<p<0
\end{equation}
valid for positive definite $ x. $
By using the rule for the Fréchet differential of a product, or by an elementary direct calculation, we obtain the general identity
\begin{equation}\label{diff of x^q for 1<q<2}
\fdm{}(x^{p+1})h= hx^p+x\fdm{}(x^p)h,
\end{equation}
which combined with (\ref{diff of x^p for 0<p<1}) provides a formula for the Fréchet differential of $ x^p $ for $ 1<p<2. $
If $ h $ is self-adjoint the formula in (\ref{diff of x^q for 1<q<2}) may be written on the form
\[
\fdm{}(x^{p+1})h=\frac{hx^p+x^ph}{2}+\fdm{}(x^p)\frac{xh+hx}{2}
\]
which is then manifestly self-adjoint.

\subsection{The deformed logarithm}

By setting $ t=1 $ in (\ref{integral representation of t^p for 0<p<1}) we obtain
\[
1=\frac{\sin p\pi}{\pi}\int_0^\infty \frac{\lambda^{p-1}}{1+\lambda}\,d\lambda
\]
and thus
\[
t^p-1=\frac{\sin p\pi}{\pi}\int_0^\infty \left(\frac{t(1+\lambda)}{t+\lambda}-1\right)\frac{\lambda^{p-1}}{1+\lambda}\,d\lambda
=\frac{\sin p\pi}{\pi}\int_0^\infty  \frac{t-1}{t+\lambda}\,\frac{\lambda^p}{1+\lambda}\,d\lambda
\]
for $ 0<p<1 $ and $ t>0. $
We therefore obtain the following integral representation of the deformed logarithm
\begin{equation}\label{integral representation of the deformed logarithm}
\log_q t=\frac{t^{q-1}-1}{q-1}=\frac{\sin(q-1)\pi}{(q-1)\pi}\int_0^\infty  \frac{t-1}{t+\lambda}\,\frac{\lambda^{q-1}}{1+\lambda}\,d\lambda\qquad t>0
\end{equation}
valid for $ 1<q<2. $ Since by an elementary calculation
\[
\fds{}\left(\frac{x-1}{x+\lambda}\right)h=(1+\lambda)(x+\lambda)^{-1}h(x+\lambda)^{-1}
\]
we derive the formula
\begin{equation}\label{Frechet derivative of deformed logarithm}
\fdl\log_q(x)h=\frac{\sin (q-1)\pi}{(q-1)\pi}\int_0^\infty (x+\lambda)^{-1}h(x+\lambda)^{-1}\lambda^{q-1}\,d\lambda
\end{equation}
valid for positive definite $ x $ and $ 1<q<2. $ Note that
\begin{equation}
\fdl\log_q(x)h=\frac{1}{q-1}\fdm{}(x^{q-1}) h
\end{equation}
for all $ q>1 $ by the definition of the deformed logarithm.
If we in formula (\ref{Frechet derivative of deformed logarithm}) let $ q $ tend to $ 1 $ we obtain
\[
\fdl\log(x)h=\int_0^\infty (x+\lambda)^{-1}h(x+\lambda)^{-1}\,d\lambda
\]
as expected. If we instead set $ h=1, $ we recover the classical integral
\[
t^{q-2}=\frac{\sin(q-1)\pi}{(q-1)\pi}\int_0^\infty \frac{\lambda^{q-1}}{(t+\lambda)^2}\,d\lambda
\]
valid for $ t>0 $ and $ 1<q<2. $

\subsection{The deformed exponential}\label{formulae for Frechet differentials of deformed exponentials}

We next derive integral representations for the deformed exponential in the parameter interval $ q\ge 3/2. $ We first note that
\[
\exp_q\Bigl(t-\frac{1}{q-1}\Bigr)=\Bigl((t-\frac{1}{q-1})(q-1)+1\Bigr)^{1/(q-1)}=\bigl((q-1)t\bigr)^{1/(q-1)}
\]
for $ t>0. $ Therefore,
\begin{equation}
\fdm{}\exp_q\Bigl(x-\frac{1}{q-1}\Bigr)h=(q-1)^{1/(q-1)}\fdm{}(x^{1/(q-1)})h
\end{equation}
for positive definite $ x. $ We divide the analysis into four cases:

\begin{enumerate}

\item If $ q<0 $ then 
\[
-1<\frac{1}{q-1}<0
\]
and we may therefore calculate $ \fdm{}(x^{1/(q-1)})h $ by  the formula in (\ref{diff of x^p for -1<p<0}).

\item If $ q=\frac{3}{2} $ then $ (q-1)^{-1}=2, $ thus
\[
\exp_{3/2}(x-2)=\frac{x^2}{4}\qquad\text{and}\qquad\fdm{}\exp_{3/2}(x-2)h=\frac{xh+hx}{4}\,.
\]

\item If $ \frac{3}{2}<q<2 $ then we have
\[
\frac{1}{q-1}=p+1\qquad\text{for some}\quad p\in(0,1). 
\]
We may therefore calculate $ \fdm{}(x^{1/(q-1)})h $ by  the formulae in (\ref{diff of x^q for 1<q<2}) 
and (\ref{diff of x^p for 0<p<1}).

\item If $ q=2 $ then $ (q-1)^{-1}=1, $ thus
\[
\exp_2(x-1)=x\qquad\text{and}\qquad\fdm{}\exp_2(x-1)h=h.
\]

\item If $ q> 2 $ then
\[
0<\frac{1}{q-1}=p<1
\]
and we may calculate  $ \fdm{}(x^{1/(q-1)})h $ by the formula in (\ref{diff of x^p for 0<p<1}).

\end{enumerate}

Note that we for any $ q>1 $ and  $ x>-(q-1)^{-1} $ have the identity
\[
\trs\fdm{}\exp_q (x)h=\trs \exp_q(x)^{2-q}h.
\]
Likewise,
\[
\fdm{}\exp_q (x)h=\exp_q(x)^{2-q}h
\]
for commuting $ x $ and $ h. $

\subsection*{Acknowledgments}

The authors would like to thank the anonymous referees for helpful suggestions.
The first author acknowledges support by the Japanese Grant-in-Aid for scientific research 17K05267 and by the National Science Foundation of China 11301025. The second and third authors acknowledge support from the National Science Foundation of China 11571229.



\end{document}